\documentclass{article}

\usepackage{authblk}
\usepackage{amsmath}
\usepackage{amssymb}
\usepackage{amsthm}

\theoremstyle{definition}
\newtheorem*{defn}{Definition}

\theoremstyle{plain}
\newtheorem{lem}{Lemma}
\newtheorem{thm}{Theorem}

\DeclareMathOperator{\spn}{span}
\DeclareMathOperator{\rk}{rk}
\DeclareMathOperator{\id}{id}
\DeclareMathOperator{\im}{im}
\DeclareMathOperator{\CSS}{CSS}
\DeclareMathOperator{\GL}{GL}

\title{Qudit homological product codes}

\author[1]{M\'at\'e Farkas\thanks{\texttt{mate.farkas@phdstud.ug.edu.pl}}}
\author[2]{P\'eter Vrana\thanks{\texttt{vranap@math.bme.hu}}}
\affil[1]{Department of Theoretical Physics, Budapest University of Technology and Economics, Budafoki \'ut 8., 1111 Budapest, Hungary}
\affil[2]{Department of Geometry, Budapest University of Technology and Economics, Egry J\'ozsef u. 1., 1111 Budapest, Hungary}

\begin{document}
\maketitle

\begin{abstract}
In this note we show that the random homological product code construction of Bravyi and Hastings can be extended to qudits of dimension $D$ with $D$ an odd prime. While the result is not surprising, the proof does require new ideas.
\end{abstract}

\section{Introduction}

Recently it has been shown that the homological product of two random chain 
complexes gives rise to good CSS codes with stabilizer weight $O(\sqrt{n})$, 
where $n$ is the code size. \cite{BravyiHastingsACM,BravyiHastings} The proof
makes use of a simplified version of the homological product, called the
``single sector theory''. In this construction a chain complex is a pair
$(C,\partial)$ with $C$ a free module over a ring $R$ with a distinguished
basis and the linear map $\partial:C\to C$ satisfies $\partial^2=0$. When $2=0$
in $R$ (that is, if $R=\mathbb{Z}_2$, we have that $1+1=0$), it is possible to
define tensor products of such chain complexes as $(C_1\otimes C_2,\partial_1\otimes I_2+I_1\otimes\partial_2)$. It is then shown that if $R=\mathbb{Z}_2$ and $\partial_1$ and $\partial_2$ are random conjugates of some fixed boundary operator, then the product complex leads to good codes with high probability.

Unfortunately, when $2\neq 0$, $\partial_1\otimes I_2+I_1\otimes\partial_2$ is not a boundary operator in general, since its square is $2\partial_1\otimes\partial_2$. In typical applications of homological algebra, the modules $C_1$ and $C_2$ are graded and boundary maps are homogeneous of degree $-1$ (or $+1$), and so it is possible to remedy the situation by introducing signs depending on the grading. In ref. \cite{BravyiHastings} this variant is called the ``multiple sector theory''. However, in this case it seems to be difficult to find good lower bounds on the distance of the homological product code. Our main result is the analysis of an intermediate structure, one which is rich enough to make it possible to form tensor products even in the qudit case, and at the same time it is sufficiently close to the single sector variant so that the proofs can be modified to reach essentially the same conclusion.

The structure of this paper is as follows. In section \ref{sec:double} we describe the above mentioned intermediate structure, and in section \ref{sec:example}
we give an example of the construction. In section \ref{sec:random} we extend the proof of Bravyi and Hastings to qudit codes. 
Here we mainly focus on proofs that are significantly different from those of
the single sector theory. While the main line of argument is presented here,
the reader may refer to \cite{BravyiHastings} for further details.

\section{Double sector theory}\label{sec:double}

To define the ``single sector'' theory, one starts with a single $\mathbb{Z}_D$-module $C$ with a basis and a boundary operator $\delta:C\to C$, which satisfies $\delta^2=0$. In this case the columns and rows of (the matrix of) $\delta$ give $Z$ and $X$-type stabilizer generators.

When $D=2$, the tensor product of two chain complexes $(C_i,\delta_i)$ can be defined in the single sector theory simply as $(C,\partial)$ with $C=C_1\otimes C_2$ and $\partial=\delta_1\otimes I_2+I_1\otimes\delta_2$. This actually works over any ring as long as $2=0$. Ref. \cite{BravyiHastingsACM} uses the single sector theory to form the product of two independent random chain complexes, because in this framework the distance of the resulting CSS code is easier to analyze.

When $D>2$, it is not possible to form a tensor product of chain complexes in the single sector variant. However, it is possible to introduce some extra structure which lies between the single and multiple sector theories in the sense that it allows us to form tensor products, but it is not much more difficult to find lower bounds on the distance of the product of suitably defined random codes. One possibility is given by the following definition
(see e.g. \cite{HiltonStammbach}):
\begin{defn}
A differential $\mathbb{Z}_D$-module with involution (or chain complex with involution over $\mathbb{Z}_D$) is a triple $(C,\partial,P)$ where $C$ is a module over $\mathbb{Z}_D$, $\partial:C\to C$ and $P:C\to C$ are $\mathbb{Z}_D$-linear and satisfy
\begin{equation}\label{eq:complexwithinvolution}
\begin{split}
\partial^2 & = 0  \\
P^2 & = I  \\
\partial P+P\partial & = 0.
\end{split}
\end{equation}

The tensor product of two differential $\mathbb{Z}_D$-modules with involution $(C_1,\delta_1,P_1)$ and $(C_2,\delta_2,P_2)$ is defined to be $(C,\partial,P)$ where $C=C_1\otimes C_2$, $\partial=\delta_1\otimes I_2+P_1\otimes\delta_2$ and $P=P_1\otimes P_2$.
\end{defn}

With this definition, the tensor product also satisfies eqn. \eqref{eq:complexwithinvolution}. When $D=2$, the single sector theory suffices, whereas when $D$ is not a prime, additional difficulties arise, and we do not know how these can be overcome. For this reason, from now on we will assume that $D$ is an odd prime. In this case $\mathbb{Z}_D$ is a field, therefore the appearing modules are in fact vector spaces. In addition, $2$ is invertible, so we can form the linear combinations
\begin{equation}
P_+=\frac{I+P}{2}\text{ and }P_-=\frac{I-P}{2},
\end{equation}
where $I$ is the identity operator. These satisfy $P_+^2=P_+$, $P_-^2=P_-$ and $P_+P_-=P_-P_+=0$. Moreover, $P_++P_-=I$ and $P_+-P_-=P$. This implies that $C$ can be written as a direct sum $C=C_+\oplus C_-$ with $C_\pm=P_\pm C$, these are invariant subspaces of $P$ and $P|_{C_\pm}=\pm \id_{C_\pm}$. It also follows that $\partial P_\pm=P_\mp\partial$, therefore $\partial$ takes $C_\pm$ to $C_\mp$. Equivalently, we could have started with a pair of spaces $C_+$ and $C_-$ and a pair of linear maps $\partial_{+-}:C_-\to C_+$ and $\partial_{-+}:C_+\to C_-$ satisfying $\partial_{+-}\partial_{-+}=\partial_{-+}\partial_{+-}=0$. In line with the naming convention of ref. \cite{BravyiHastings}, this may be called the ``double sector theory''.

The key feature of the homological product is that it does not increase the stabilizer weights too much. Let $w(A)$ denote the maximal weight over the rows and columns of a matrix $A$ (when $A$ is a linear map, this depends on the basis, but we will always use a fixed basis). Clearly, $w(A+B)\le w(A)+w(B)$ and $w(A\otimes B)=w(A)w(B)$ holds with respect to the product basis. In the single sector theory these imply that $w(\partial)\le w(\delta_1)+w(\delta_2)$, but in our case this relation becomes
\begin{equation}
w(\partial)\le w(\delta_1)+w(P_1)w(\delta_2).
\end{equation}
In order to get the best possible bound, we will choose $P_i$ to be (in block matrix form)
\begin{equation}
P_i=\begin{bmatrix}
I & 0 \\
0 & -I
\end{bmatrix}.
\end{equation}
Then $w(P_i)=1$, therefore the same inequality holds as in the single sector theory. An additional advantage of this choice is that $C_+$ and $C_-$ are now spanned by subsets of the basis vectors. In the next section we would like to consider multiple boundary operators on a single space. One possibility is to choose a fixed boundary $\delta$ and conjugate it with arbitrary invertible matrices. In order to retain compatibility with the involution, we need to conjugate that with the same matrix as well, but this would potentially increase its row and column weights. This can be prevented by requiring that the invertible matrix commutes with $P$. In a block matrix form, the commutator reads
\begin{equation}
\begin{bmatrix}
I & 0 \\
0 & -I
\end{bmatrix}\cdot\begin{bmatrix}
A & B \\
C & D
\end{bmatrix}
-\begin{bmatrix}
A & B \\
C & D
\end{bmatrix}\cdot\begin{bmatrix}
I & 0 \\
0 & -I
\end{bmatrix}=
\begin{bmatrix}
0 & 2B \\
-2C & 0
\end{bmatrix}.
\end{equation}
Since $2$ is invertible, the commutator is $0$ iff $B=0$ and $C=0$. The conjugating matrix is therefore block diagonal. Let us call its diagonal blocks $U_+$ and $U_-$. Then the boundary operator transforms as
\begin{equation}
\begin{bmatrix}
U_+ & 0 \\
0 & U_-
\end{bmatrix}\cdot
\begin{bmatrix}
0 & \delta_{+-} \\
\delta_{-+} & 0
\end{bmatrix}\cdot
\begin{bmatrix}
U_+^{-1} & 0 \\
0 & U_-^{-1}
\end{bmatrix}
=
\begin{bmatrix}
0 & U_+\delta_{+-}U_-^{-1} \\
U_-\delta_{-+}U_+^{-1} & 0
\end{bmatrix}.
\end{equation}
It is possible to show that both $\delta_{+-}$ and $\delta_{-+}$ can be brought to a standard form simultaneously, which only depend on the two homological dimensions. We will choose random boundary operators by fixing a standard $\delta_0$ and letting $\delta_{+-}=U_+\delta_0 U_-^{-1}$ and $\delta_{-+}=U_-\delta_0 U_+^{-1}$ with $U_+$ and $U_-$ drawn uniformly from the set of invertible matrices. We choose $\delta_0$ to be
\begin{equation}\label{eq:standardboundary}
\delta_0=\begin{bmatrix}
0 & 0 & 0  \\
0 & 0 & I  \\
0 & 0 & 0
\end{bmatrix}
\end{equation}
in block matrix form with row/column sizes $H$, $L$, $L$ in this order, where $H$ is the homological dimension and $H+2L=\dim C_+=\dim C_-$. In particular, we choose these parameters to be the same for the $+$ and $-$ blocks, but it would be possible to allow different values.

Once we have two chain complexes with involution and form the tensor product, it would be possible to forget the involution and construct a CSS code as in the single sector theory. Unfortunately, the code distance of the product of two random complexes seems to be difficult to bound in this case. Instead of this, we adopt a definition which resembles the multiple sector theory. Physical qudits will correspond to the basis elements of $C_+$, the columns of $\partial_{+-}$ will be $Z$-type generators and the rows of $\partial_{-+}$ will be the $X$-type generators (for this we need the special form of $P$ given above). The code obtained this way will be denoted by $\CSS(C,\partial,P)$.

\section{Example}\label{sec:example}

In this section, we give an example of the construction, and calculate the code
properties numerically. For the sake of simplicity, we begin with the most
compact code that fits in our framework: a $[3,1,2,3]$ qutrit code with
stabilizer generators $XXX$ and $ZZZ$. This code has distance 2, so it corrects
one erasure error, if the position is known \cite{Muralidharan}. In the
following, we construct the homological product of this code with itself. Note
that $\delta$ defines this code, and is a valid boundary operator, if its blocks
are given by
\begin{equation}\label{eq:312deltapm}
\delta_{+-}=\delta_{-+}=
\begin{bmatrix}
1 & 1 & 1 \\
1 & 1 & 1 \\
1 & 1 & 1
\end{bmatrix},
\end{equation}
thus the boundary operator is
\begin{equation}\label{eq:312delta}
\delta=
\begin{bmatrix}
0 & 0 & 0 & 1 & 1 & 1 \\
0 & 0 & 0 & 1 & 1 & 1 \\
0 & 0 & 0 & 1 & 1 & 1 \\
1 & 1 & 1 & 0 & 0 & 0 \\
1 & 1 & 1 & 0 & 0 & 0 \\
1 & 1 & 1 & 0 & 0 & 0
\end{bmatrix}.
\end{equation}

Now, let us construct the boundary operator of the product code, i.e. $\partial
=\delta\otimes I +P_1\otimes\delta$, where
\begin{equation}\label{eq:P3}
P_1=
\begin{bmatrix}
I & 0 \\
0 & -I
\end{bmatrix},
\end{equation}
thus in block form
\begin{equation}\label{eq:1814delta}
\partial=
\begin{bmatrix}
\delta & 0 & 0 & I & I & I \\
0 & \delta & 0 & I & I & I \\
0 & 0 & \delta & I & I & I \\
I & I & I & -\delta & 0 & 0 \\
I & I & I & 0 & -\delta & 0 \\
I & I & I & 0 & 0 & -\delta
\end{bmatrix}.
\end{equation}
From here, it is immediate that the weight of the product code is $w=6$. We
obtained the distance by an exhaustive search over all non-trivial cycles and
cocycles of $\partial$, and it turns out that the distance of this product code
is $d=4$. Thus, we obtained a $[18,1,4,6]$ qutrit code. Note that the stabilizer
weight is indeed low. As a comparison, concatenating the code with itself
would result in full ($w=n$) stabilizer weight for any level of concatenation.

In order to see the stabilizers, we need to bring our code to the standard
form described in section \ref{sec:double}. That is, we need to perform a 
permutation $\Pi$, such that
\begin{equation}\label{eq:transf}
\Pi (P_1\otimes P_2) \Pi^\dagger =
\begin{bmatrix}
I & 0 \\
0 & -I
\end{bmatrix}.
\end{equation}
Once we solve this equation, we get a boundary operator
\begin{equation}\label{eq:transfdelta}
\Pi \partial \Pi^\dagger =
\begin{bmatrix}
0 & \partial_{-+} \\
\partial_{+-} & 0
\end{bmatrix}.
\end{equation}
The columns of $\partial_{-+}$ will give us the $Z$-type generators, while the
rows of $\partial_{+-}$ the $X$-type generators. For example, the first column
of $\partial_{-+}$ and the last row of $\partial_{+-}$ give, respectively:
\begin{align}\label{eq:prodstabilizer}
&ZZZIIIIIIZIIZIIZII, \\
II&XIIXIIXIIIIIIX^2X^2X^2.
\end{align}

\section{Homological product of two random chain complexes with involutions}\label{sec:random}

In this section we let $n=H+2L$ with $H=\lfloor\rho n\rfloor$ ($\rho$ is
defined in Lemma \ref{lem:randomcode},
but it will turn out, that it is -- up to the floor function --
the encoding rate, i.e. the ratio of logical and physical qudits in our code),
and $\delta_0$ is the matrix from eq.
\eqref{eq:standardboundary}. By a random $2n\times 2n$ boundary operator on
$C=C_+\oplus C_-$ with $C_\pm=\mathbb{Z}_D^n$ we mean
\begin{equation}\label{eq:randomconjugate}
\delta=\begin{bmatrix}
U_+ & 0  \\
0 & U_-
\end{bmatrix}
\begin{bmatrix}
0 & \delta_0  \\
\delta_0 & 0
\end{bmatrix}
\begin{bmatrix}
U_+^{-1} & 0  \\
0 & U_-^{-1}
\end{bmatrix}
\end{equation}
in block form. The boundary operator in the middle will be called the standard one.

\begin{lem}\label{lem:randomcode}
For any $\varepsilon>0$ there exist $c,\rho>0$ such that the probability that the kernel of a random $2n\times 2n$ boundary operator with $H\le\rho n$ contains a nonzero vector with weight less than $cn$ is $O(D^{(-1/2+\varepsilon)n})$.
\end{lem}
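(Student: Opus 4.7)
The plan is to unpack the kernel of $\delta$ into two independent random subspaces and then apply a Grassmannian union bound over low-weight vectors. Because $\delta$ has the off-diagonal block form in eq.~\eqref{eq:randomconjugate}, and because $U_\pm$ are invertible, a pair $(v_+,v_-)\in C_+\oplus C_-$ lies in $\ker\delta$ iff $U_+^{-1}v_+\in\ker\delta_0$ and $U_-^{-1}v_-\in\ker\delta_0$. From eq.~\eqref{eq:standardboundary}, $K:=\ker\delta_0$ is the coordinate subspace spanned by the first $H+L$ basis vectors, so $\dim K=H+L$. A nonzero element of $\ker\delta$ of weight less than $cn$ must have at least one component $v_\pm$ nonzero and of weight less than $cn$, so up to a factor of $2$ and the obvious symmetry between the two sectors it suffices to bound the probability that $UK$ contains such a vector for $U$ uniform in $\GL_n(\mathbb{Z}_D)$.

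Since $\GL_n(\mathbb{Z}_D)$ acts transitively on $(H+L)$-dimensional subspaces of $\mathbb{Z}_D^n$, $UK$ is uniformly distributed on that Grassmannian. For any fixed nonzero $v$, the standard subspace count yields
$$\Pr[v\in UK]=\frac{D^{H+L}-1}{D^n-1}\le 2\,D^{-L}.$$
Summing over nonzero $v$ of weight less than $cn$,
$$\Pr[\exists\,0\neq v\in UK\colon w(v)<cn]\le 2\,D^{-L}\sum_{w=1}^{\lfloor cn\rfloor-1}\binom{n}{w}(D-1)^w\le D^{\alpha(c)n-L+o(n)},$$
where the standard bound on the volume of a Hamming ball of radius $cn$ in $\mathbb{Z}_D^n$ supplies a function $\alpha(c)\to 0$ as $c\to 0^+$.

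Combining with $L\ge(1-\rho)n/2$, which follows from $H\le\rho n$ and $n=H+2L$, the overall probability is at most $D^{n(-1/2+\rho/2+\alpha(c))+o(n)}$. Choosing $\rho$ and $c$ small enough that $\rho/2+\alpha(c)<\varepsilon$ gives the claimed $O(D^{(-1/2+\varepsilon)n})$ bound. This is essentially the odd-prime adaptation of the Bravyi--Hastings argument: the two sectors are randomised independently, transitivity of $\GL_n$ delivers the uniform Grassmannian measure, and the kernel of the standard boundary is a coordinate subspace so the geometry is as benign as in the single-sector case. The only calibration needed is to absorb the combinatorial entropy of low-weight vectors into the $D^{-L}$ collision probability; since both $\rho$ and $c$ may be taken arbitrarily small this is routine rather than a genuine obstacle, consistent with the paper's remark that the real work lies in the steps that differ substantively from the single-sector theory.
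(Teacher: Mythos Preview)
Your proof is correct, but it runs the union bound in the opposite direction from the paper's. The paper fixes a nonzero vector $w$ in the kernel of the \emph{standard} boundary, splits into three cases according to which of $w_\pm$ vanish, observes that $Uw$ is then uniform on the corresponding set of vectors (nonzero in $C_+$ only, in $C_-$ only, or in both), bounds the probability that $Uw$ has weight $<cn$, and sums over the $O(D^{(1+\rho)n/2})$ or $O(D^{(1+\rho)n})$ choices of $w$. You instead fix a low-weight vector $v\in C_+$ (or $C_-$), use transitivity of $\GL_n(\mathbb{Z}_D)$ to conclude that $U_\pm K$ is uniform on the Grassmannian, bound $\Pr[v\in U_\pm K]\le 2D^{-L}$, and sum over low-weight $v$. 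Both routes land on the same exponent $-\tfrac12+\tfrac{\rho}{2}+h_D(c)$ with $h_D(c)=-c\log_D c-(1-c)\log_D(1-c)+c\log_D(D-1)$; your $\alpha(c)$ is exactly this $h_D(c)$. The advantage of your version is that the reduction to a single sector avoids the three-way case split and makes the Grassmannian structure explicit; the paper's version, on the other hand, makes the independence of $U_+$ and $U_-$ visible and yields a strictly smaller contribution from the ``both $w_\pm$ nonzero'' case, though this does not improve the final bound.
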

\begin{proof}
Let $\delta$ be as in eq. \eqref{eq:randomconjugate}. Since
\begin{equation}
U=\begin{bmatrix}
U_+ & 0  \\
0 & U_-
\end{bmatrix}
\end{equation}
is invertible, $\ker\delta=U\ker(U^{-1}\delta U)$. $P$ anticommutes with $\delta$, therefore if $v\in\ker\delta$ then $v_\pm=\frac{1}{2}(I\pm P)v\in\ker\delta$, which in turn holds iff $U^{-1}v_\pm\in\ker(U^{-1}\delta U)$. 

For any nonzero vector $w\in \ker(U^{-1}\delta U)$ there are three possibilities according to whether the vectors $w_\pm=\frac{1}{2}(I\pm P)w$ vanish or not. If e.g. $w_+\neq0$ and $w_-=0$, then $Uw$ is distributed uniformly on the set of vectors $v$ with $v_+\neq 0$ and $v_-=0$. When $c<1-\frac{1}{D}$, the probability that such a vector has weight less than $cn$ is at upper bounded by
\begin{equation}
\frac{1}{D^n-1}\sum_{1\le w<cn}\binom{n}{w}(D-1)^w\le O(1)\cdot D^{(-(1-c)\log_D(1-c)-c\log_Dc+c\log_D(D-1)-1)n},
\end{equation}
using Stirling's formula to get the inequality. We get the same probability when $w_+=0$ and $w_-\neq 0$. When $w_+\neq 0\neq w_-$, the vector $Uw$ is uniform on vectors $v$ with $v_\pm\neq 0$. This time the probability of having weight less than $cn$ is upper bounded by
\begin{equation}
O(1)\cdot D^{(-(1-c)\log_D(1-c)-c\log_Dc+c\log_D(D-1)-1)2n}.
\end{equation}

Finally, we apply the union bound. The number of vectors of the first two types is $O(1)\cdot D^{\frac{1}{2}n+\frac{1}{2}\lfloor\rho n\rfloor}$ and the number of vectors of the third type is $O(1)\cdot D^{n+\lfloor\rho n\rfloor}$, therefore the probability of having a nonzero vector with weight less than $cn$ is at most
\begin{equation}
O(1)\cdot D^{(-(1-c)\log_D(1-c)-c\log_Dc+c\log_D(D-1)-\frac{1}{2}+\frac{\rho}{2})n}.
\end{equation}
For any $D>0,\epsilon>0$ and small enough $c,\rho$ the exponent is less than $(-\frac{1}{2}+\epsilon)n$.
\end{proof}

\begin{lem}\label{lem:reducedcomplex}
Let $(C,\delta,P)$ be a chain complex with involution, $\mathcal{V}\le C$ a subspace and $\mathcal{V}^>$ a direct complement. Let $W,W^>:C\to C$ be the projections corresponding to this direct sum decomposition and suppose that $PW=WP$. Let $S^>=W\delta(\mathcal{V}^>)$ and $\mathcal{V}'=\mathcal{V}/S^>$. Let $\varphi:C\to\mathcal{V}'$ be defined as $h\mapsto Wh+S^>$. Then the maps $\delta',P':\mathcal{V}'\to\mathcal{V}'$ given by
\begin{equation}
\delta'(x+S^>)=\varphi(\delta(x))\text{ and }P'(x+S^>)=\varphi(P(x))
\end{equation}
are well defined, $(\mathcal{V}',\delta',P')$ is a chain complex with involution, and $\varphi:C\to\mathcal{V}'$ is a chain map satisfying $\varphi P=P'\varphi$. Moreover, the equalities
\begin{equation}\label{eq:reducedkerim}
\ker\delta'=\varphi(\delta^{-1}(\mathcal{V}^>))\text{ and }\im\delta'=\varphi(\im\delta)
\end{equation}
hold.
\end{lem}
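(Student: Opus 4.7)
The plan is to leverage two structural facts. First, $PW = WP$ immediately gives $PW^> = W^>P$, so $\mathcal{V}$ and $\mathcal{V}^>$ are both $P$-invariant; combined with the anticommutation $P\delta = -\delta P$, this makes $S^> = W\delta(\mathcal{V}^>)$ itself $P$-invariant (since $PW\delta v^> = WP\delta v^> = -W\delta(Pv^>)$ and $Pv^> \in \mathcal{V}^>$), and hence $P' = \varphi \circ P$ descends to $\mathcal{V}/S^>$. Second, $\delta^2 = 0$ gives $W\delta(S^>) \subseteq S^>$: for $s = W\delta(v^>)$ one has $\delta s = -\delta W^>\delta(v^>)$, so $W\delta s \in W\delta(\mathcal{V}^>) = S^>$, which is exactly what is needed for $\delta'$ to be well defined. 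I would spell out these two closures first, as they underlie everything else.

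With $\delta'$ and $P'$ defined, the three relations $\delta'^2 = 0$, $P'^2 = I$, and $\delta'P' + P'\delta' = 0$ reduce to short modular computations on representatives $x \in \mathcal{V}$. A useful simplification is $WPW = PW$, which follows from $WP = PW$ and $W^2 = W$ and lets $P'$ act on cosets exactly as $P$ acts on $\mathcal{V}$; similarly $W\delta W \equiv W\delta \pmod{S^>}$ is the master identity that kills both $\delta'^2$ and the chain map defect. The chain map property $\varphi\delta = \delta'\varphi$ unwinds to the single congruence $W\delta Wh - W\delta h = -W\delta W^>h \in S^>$, and $\varphi P = P'\varphi$ is again immediate from $WPW = PW$.

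For the image equation in \eqref{eq:reducedkerim}, I would use $C = \mathcal{V} + \mathcal{V}^>$ to write $\im\delta = \delta\mathcal{V} + \delta\mathcal{V}^>$ and then apply $\varphi$, noting that $\varphi(\delta\mathcal{V}^>) = W\delta\mathcal{V}^> + S^> = S^>$ is the zero coset, whence $\varphi(\im\delta) = \varphi(\delta\mathcal{V}) = \im\delta'$. I expect the kernel equation to be the step requiring the most care. The inclusion $\varphi(\delta^{-1}(\mathcal{V}^>)) \subseteq \ker\delta'$ is direct: if $\delta h \in \mathcal{V}^>$ then $W\delta Wh = W\delta h - W\delta W^>h \in 0 + S^>$. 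The reverse inclusion is the crux. Given $x \in \mathcal{V}$ with $W\delta x \in S^>$, the defining form $S^> = W\delta(\mathcal{V}^>)$ provides $u \in \mathcal{V}^>$ with $W\delta u = W\delta x$; then $\delta(x-u) = W^>\delta x - W^>\delta u \in \mathcal{V}^>$, so $h := x - u \in \delta^{-1}(\mathcal{V}^>)$, and $\varphi(h) = x + S^>$ because $Wu = 0$. The main obstacle is thus the one place where the concrete description of $S^>$ (as the $W\delta$-image of $\mathcal{V}^>$, rather than some abstract subspace) is really used: trading an element of $S^>$ for a genuine $W\delta$-image of something in $\mathcal{V}^>$ is what permits the shift $x \mapsto x-u$ into the preimage of $\mathcal{V}^>$.
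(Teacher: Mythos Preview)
Your proposal is correct and follows essentially the same approach as the paper. The well-definedness arguments for $\delta'$ and $P'$ are identical in content to the paper's (the key identities $W\delta W\delta y=-W\delta W^>\delta y$ and $PW\delta y=-W\delta Py$ with $Py\in\mathcal{V}^>$), and the deduction of ${\delta'}^2=0$, ${P'}^2=I$, $P'\delta'+\delta'P'=0$ from the intertwining relations and surjectivity of $\varphi$ is exactly what the paper does; the paper defers the proof of \eqref{eq:reducedkerim} to \cite[Lemma~8]{BravyiHastings}, and your explicit treatment of both inclusions (in particular the shift $x\mapsto x-u$ with $u\in\mathcal{V}^>$ satisfying $W\delta u=W\delta x$) is precisely the argument given there.
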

\begin{proof}
Suppose that $x\in S^>$, i.e. there is some $y\in\mathcal{V}^>$ such that $x=W\delta y$. Then $W\delta x=W\delta W\delta y=W\delta(W-I)\delta y=-W\delta W^>\delta y\in S^>$, therefore $\varphi(\delta x)=W\delta x+S^>=S^>$. Similarly, $\varphi(P x)=\varphi(PW\delta y)=\varphi(-W\delta Py)\in S^>$, because $Py=PW^>y=W^>Py\in\mathcal{V}^>$.

By definition, $\delta'\varphi=\varphi\delta$ and $P'\varphi=\varphi P$. This implies that $\delta'\delta'\varphi=\varphi\delta\delta=0$, $P' P'\varphi=\varphi P P=\varphi$ and $(P'\delta'+\delta'P')\varphi=\varphi(P\delta+\delta P)=0$. By surjectivitiy of $\varphi$, the equalities ${\delta'}^2=0$, ${P'}^2=I$ and $P'\delta'+\delta'P'=0$ follow.

The proof of eq. \eqref{eq:reducedkerim} is the same as in \cite[Lemma 8]{BravyiHastings} (Lemma 10 in \cite{BravyiHastingsACM}).
\end{proof}

\begin{defn}
A boundary operator $\delta:C_+\oplus C_-\to C_+\oplus C_-$ is called good if neither $\ker\delta\cap C_+$ nor $\ker\delta\cap C_-$ contains a nonzero vector supported on the last $n-n'$ coordinates.
\end{defn}

In the following $\mathcal{V}$ will be the subspace of $C$ spanned by the first $n'$ basis vectors in $C_\pm$ and $\mathcal{V}^>$ the direct complement spanned by the remainig ones. Then the conditions of lemma \ref{lem:reducedcomplex} are satisfied.

\begin{lem}
Let $\delta$ be a good boundary operator. Then $\dim\mathcal{V}'_{\pm}=2n'-n$ and
\begin{align}
\dim(\ker\delta'_{-+}) & =\dim(\ker\delta_{-+})-(n-n')  \\
\dim(\im\delta'_{-+}) & =\dim(\im\delta_{-+})-(n-n')
\end{align}
hold, and similarly for $\delta'_{+-}$.
\end{lem}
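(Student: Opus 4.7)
My plan is to first unpack the block-diagonal structure. Since $P$ acts as $\pm\id$ on $C_\pm$ and $\mathcal{V}$ is spanned by coordinate basis vectors from each block, the decomposition $\mathcal{V} = \mathcal{V}_+ \oplus \mathcal{V}_-$ and $\mathcal{V}^> = \mathcal{V}^>_+ \oplus \mathcal{V}^>_-$ respects $P$, and so do $W$, $W^>$, $S^>$, and the quotient map $\varphi$. In particular, $S^> = S^>_+ \oplus S^>_-$ with $S^>_- = W_-\delta_{-+}(\mathcal{V}^>_+)$ and $S^>_+ = W_+\delta_{+-}(\mathcal{V}^>_-)$, and $\varphi$ restricts to surjections $\varphi_\pm : C_\pm \to \mathcal{V}'_\pm = \mathcal{V}_\pm/S^>_\pm$ with $\ker\varphi_\pm = S^>_\pm + \mathcal{V}^>_\pm$ (direct sum).

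The first substantive step is to show $\dim S^>_- = n - n'$ (and symmetrically $\dim S^>_+ = n - n'$). Goodness gives $\ker \delta_{-+} \cap \mathcal{V}^>_+ = 0$, so $\delta_{-+}$ is injective on $\mathcal{V}^>_+$ and $\dim \delta_{-+}(\mathcal{V}^>_+) = n - n'$. Then I need $W_-$ to be injective on this image, i.e.\ $\delta_{-+}(\mathcal{V}^>_+) \cap \mathcal{V}^>_- = 0$. This is where both halves of goodness are combined via $\delta^2=0$: from $\delta_{+-}\delta_{-+}=0$ one gets $\im \delta_{-+} \subseteq \ker \delta_{+-}$, and then goodness on the minus sector yields $\im \delta_{-+} \cap \mathcal{V}^>_- = 0$. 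This is the key use of the \emph{good} hypothesis and will be reused below. It immediately gives $\dim \mathcal{V}'_\pm = n' - (n-n') = 2n'-n$.

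Next, for $\dim \im \delta'_{-+}$ I will use the image formula of Lemma \ref{lem:reducedcomplex}, which, restricted to the grading, reads $\im \delta'_{-+} = \varphi_-(\im \delta_{-+})$. So the task reduces to computing $\dim \bigl(\im \delta_{-+} \cap (S^>_- + \mathcal{V}^>_-)\bigr)$, and I will show this intersection equals $\delta_{-+}(\mathcal{V}^>_+)$. The inclusion $\delta_{-+}(\mathcal{V}^>_+) \subseteq \im \delta_{-+} \cap (S^>_- + \mathcal{V}^>_-)$ is immediate from $\delta_{-+} u = W_-\delta_{-+} u + W^>_-\delta_{-+} u$. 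Conversely, given $y = \delta_{-+} x = W_-\delta_{-+} u + v$ with $u \in \mathcal{V}^>_+$ and $v \in \mathcal{V}^>_-$, a direct calculation gives $\delta_{-+}(x-u) = v - W^>_-\delta_{-+} u \in \mathcal{V}^>_- \cap \im \delta_{-+} = 0$, hence $y = \delta_{-+} u \in \delta_{-+}(\mathcal{V}^>_+)$. Combined with injectivity of $\delta_{-+}|_{\mathcal{V}^>_+}$, this yields $\dim \im \delta'_{-+} = \dim \im \delta_{-+} - (n-n')$.

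Finally, the kernel formula $\dim \ker \delta'_{-+} = \dim \ker \delta_{-+} - (n-n')$ follows by rank--nullity on both sides, using $\dim C_+ = n$ and $\dim \mathcal{V}'_+ = 2n'-n$. The statements for $\delta'_{+-}$ follow by swapping the roles of the two sectors throughout. I expect the main obstacle to be isolating the intersection $\im \delta_{-+} \cap \ker \varphi_-$; once it is identified with $\delta_{-+}(\mathcal{V}^>_+)$, everything else is dimension bookkeeping.
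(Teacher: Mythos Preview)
Your argument is correct and is precisely the adaptation of \cite[Lemma~9]{BravyiHastings} that the paper has in mind: use goodness on both sectors together with $\delta_{+-}\delta_{-+}=0$ to get $\im\delta_{-+}\cap\mathcal{V}^>_-=0$, deduce $\dim S^>_\pm=n-n'$, identify $\im\delta_{-+}\cap\ker\varphi_-$ with $\delta_{-+}(\mathcal{V}^>_+)$ via the image formula of Lemma~\ref{lem:reducedcomplex}, and finish with rank--nullity. There is nothing to add.
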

\begin{proof}
The proof is very similar to that of \cite[Lemma 9]{BravyiHastings} (Lemma 11 in \cite{BravyiHastingsACM}).
\end{proof}

Let $C_1=C_2=\mathbb{Z}_D^{2n}$ equipped with $P_1=P_2$ which is diagonal in the standard basis, and acts as $+1$ ($-1$) on the first (last) $n$ coordinates, and let $\delta_1,\delta_2$ be compatible boundary operators. Suppose that $\psi\in C_+\cap\ker{\partial}$ where $C=C_1\otimes C_2$, $C_+=C_{1+}\otimes C_{2+}\oplus C_{1-}\otimes C_{2-}\le C$ and $\partial=\delta_1\otimes I_1+P_1\otimes \delta_2$. $\psi$ can be thought of as a block-diagonal matrix with two $n\times n$ blocks $\psi_+$ and $\psi_-$.

If the weight of $\psi$ is less than $cn^2$ then the same is true for both blocks. Choosing some $r$ such that $c<r<1$, we can find at least $n'=(1-r)n$ rows and columns in $\psi_\pm$ having weight at most $cnr^{-1}$. The two $n'\times n'$ matrices obtained this way will be called the reduced matrix of $\psi$. An $n'\times n'$ matrix having row and column weights at most $c'n'$ (with $c'=cr^{-1}/(1-r)$) is said to satisfy the uniform low weight condition.

In the following lemma, $\mathcal{V}_i$ and $\mathcal{V}^>_i$ are the subspaces of $C_i$ defined similarly as before.
\begin{lem}
Suppose that the distance of $\CSS(C_i,\delta_i,\pm P_i)$ is at least $2(n-n')+1$ ($i=1,2$). If $h\in C_+\cap\ker\partial$ has vanishing reduced matrix then $h\in\im\partial$. A similar statement holds for cocycles.
\end{lem}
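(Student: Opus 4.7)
The plan is to mimic the single-sector argument of \cite[Lemma~10]{BravyiHastings} while tracking the involution structure introduced in Section~\ref{sec:double}. First I would apply Lemma~\ref{lem:reducedcomplex} separately to each factor $(C_i,\delta_i,P_i)$ with the subspaces $\mathcal{V}_i$ and $\mathcal{V}_i^>$ fixed before the preceding lemma. This yields reduced chain complexes with involution $(\mathcal{V}'_i,\delta'_i,P'_i)$ together with chain maps $\varphi_i\colon C_i\to\mathcal{V}'_i$ intertwining the involutions, and the tensor product $\varphi=\varphi_1\otimes\varphi_2$ is a chain map between the two tensor complexes that respects the induced involutions. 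The vanishing reduced matrix hypothesis is precisely the statement $(W_1\otimes W_2)(h)=0$, and therefore $\varphi(h)=0$ in $\mathcal{V}'_1\otimes\mathcal{V}'_2$.

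Next I would write $h=A+B$ with $A\in C_{1+}\otimes C_{2+}$ and $B\in C_{1-}\otimes C_{2-}$ (the two block-diagonal parts of $h\in C_+$), and seek a primitive $g=F+G\in C_{1+}\otimes C_{2-}\oplus C_{1-}\otimes C_{2+}$. In matrix form the equation $\partial g=h$ becomes
\[\delta_{1,+-}G+F\delta_{2,+-}^T=A,\qquad \delta_{1,-+}F-G\delta_{2,-+}^T=B.\]
The cycle condition $\partial h=0$ gives $\delta_{1,-+}A=B\delta_{2,+-}^T$ and $\delta_{1,+-}B=-A\delta_{2,-+}^T$, which is exactly the consistency the pair above requires on its overlaps. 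The vanishing reduced matrix means each column of $A$ or $B$ indexed by a good basis vector of $C_{2\pm}$ is supported on the bad rows of $C_{1\pm}$, contributing at most $n-n'$ to its weight, and together with the good-column weight bound $cn/r$ the total stays strictly below $2(n-n')+1$. Restricting the cycle condition to such a good column exhibits the column, modulo a bad-column contribution from the other block, as an element of $\ker\delta_{1,-+}$ or $\ker\delta_{1,+-}$. The distance hypothesis for $\CSS(C_1,\delta_1,\pm P_1)$ then forces it into $\im\delta_{1,+-}$ or $\im\delta_{1,-+}$, and a choice of preimage supplies the corresponding column of $G$ or $F$. The analogous argument applied to good rows of $A$ and $B$, using the distance hypotheses for $\CSS(C_2,\delta_2,\pm P_2)$, supplies the remaining entries.

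The main obstacle will be to verify that these column-wise and row-wise partial solutions assemble into globally well-defined matrices $F$ and $G$. Their consistency on entries lying at the intersection of bad rows and bad columns reduces to a further cycle-type identity that must be extracted from $\partial h=0$ restricted to the bad regions, and here it is essential to have all four sign choices $\pm P_1,\pm P_2$ covered by the distance hypothesis because each of the four good-bad configurations invokes the vanishing of low-weight logical operators in a different sector. The cocycle half of the statement is proved by the same argument applied to the dual setup, in which the transposed boundary maps play the role of the original ones and rows are interchanged with columns.
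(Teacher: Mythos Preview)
Your approach diverges from the paper's and, as written, has a genuine gap. The paper does \emph{not} construct a primitive for $h$ directly. Instead it invokes the cleaning lemma: since the distance of each factor code exceeds $|S|=2(n-n')$, every cohomology class of $(C_a,\delta_a)$ admits a representative $\bar h^i_{a\pm}$ supported entirely on the good coordinates. One then writes, via the K\"unneth formula,
\[
h=\sum_{i,j}x^{+}_{ij}\,h^i_{1+}\otimes h^j_{2+}+\sum_{i,j}x^{-}_{ij}\,h^i_{1-}\otimes h^j_{2-}+\omega,\qquad \omega\in\im\partial,
\]
and computes $x^{\pm}_{ij}=(\bar h^i_{1\pm}\otimes\bar h^j_{2\pm},h)$. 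Because the $\bar h$'s are supported on the reduced block where $h$ vanishes, every $x^{\pm}_{ij}=0$ and $h=\omega$. The distance hypothesis is used \emph{only} through the cleaning lemma; no column-by-column solving occurs. (The Bravyi--Hastings lemma you want is their Lemma~5, not Lemma~10; the latter concerns counting reduced cycles and appears in Lemma~\ref{lem:gammasum} here.)

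The gap in your sketch is that a good column $A_{\cdot j}$ is \emph{not} an element of $\ker\delta_{1,-+}$. The cycle equation $\delta_{1,-+}A=B\,\delta_{2,+-}^{T}$ gives $\delta_{1,-+}A_{\cdot j}=\sum_k B_{\cdot k}(\delta_{2,+-})_{jk}$, which is generically a nonzero vector in $C_{1-}$; you cannot absorb this ``bad-column contribution'' into $A_{\cdot j}$ because it lives in the opposite sector, and there is no visible small-weight correction in $C_{1+}$ that kills it. Hence the distance hypothesis for $\CSS(C_1,\delta_1,\pm P_1)$ never applies to the objects you actually produce, and your weight bound $\le(n-n')+cn/r$ is bounding the weight of something that is not a cycle in the factor complex. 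This is precisely why the assembly problem you flag at the end is not a mere bookkeeping issue but a real obstruction. Your opening observation that $(W_1\otimes W_2)h=0$ forces $\varphi(h)=0$ is correct and could in principle be pushed further by showing $\varphi_1\otimes\varphi_2$ is injective on the relevant homology, but that is a different argument from the one you outline and still requires the distance hypothesis in a cleaning-lemma form rather than a columnwise one.
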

\begin{proof}
The proof goes along the lines of \cite[Lemma 5]{BravyiHastings} (Lemma 5 of \cite{BravyiHastingsACM}), but there are differences due to the fact that we restrict to the subspace $C_+$. Let $\bar{h}_{a+}\in\ker\delta_{a-+}^T\setminus\im\delta_{a+-}^T$ and $\bar{h}_{a-}\in\ker\delta_{a-+}^T\setminus\im\delta_{a+-}^T$ be nontrivial cocycles ($a=1,2$) and let $S=\{n'+1,n'+2,\ldots,n\}\cup\{n+n'+1,n+n'+2,\ldots,2n\}$. Since $|S|=2(n-n')<2(n-n')+1$, the cleaning lemma (\cite[Lemma 1]{BravyiTerhal}, the same proof works for qudits) implies that there exist $\bar{\omega}_{a+}\in\im\delta_{a+-}^T$ and $\bar{\omega}_{a-}\in\im\delta_{a-+}^T$ such that the support of $\bar{h}_{a\pm}+\bar{\omega}_{a\pm}$ is disjoint from $S$.

Choose nontrivial cocycles supported outside $S$ such that their cosets span the cohomology groups:
\begin{equation}
\begin{split}
\ker\delta^T_{a-+} & =\spn(\bar{h}^1_{a+},\bar{h}^2_{a+},\ldots,\bar{h}^H_{a+})+\im\delta^T_{a+-}  \\
\ker\delta^T_{a+-} & =\spn(\bar{h}^1_{a-},\bar{h}^2_{a-},\ldots,\bar{h}^H_{a-})+\im\delta^T_{a-+}.
\end{split}
\end{equation}
Take the dual basis of nontrivial cycles:
\begin{equation}
\begin{split}
\ker\delta_{a-+} & =\spn(h^1_{a+},h^2_{a+},\ldots,h^H_{a+})+\im\delta_{a+-}  \\
\ker\delta_{a+-} & =\spn(h^1_{a-},h^2_{a-},\ldots,h^H_{a-})+\im\delta_{a-+},
\end{split}
\end{equation}
such that $(\bar{h}^i_{a\pm},h^j_{a\pm})=\delta_{ij}$. By the K\"unneth formula we have (see e.g. \cite{HiltonStammbach})
\begin{equation}
\begin{split}
\ker\partial\cap C_+=
 & \spn\{h^i_{1+}\otimes h^j_{2+}|1\le i,j\le H\}  \\
 & + \spn\{h^i_{1-}\otimes h^j_{2-}|1\le i,j\le H\}+\im\partial\cap C_+  \\
\ker\partial^T\cap C_+=
 & \spn\{\bar{h}^i_{1+}\otimes \bar{h}^j_{2+}|1\le i,j\le H\}  \\
 & + \spn\{\bar{h}^i_{1-}\otimes \bar{h}^j_{2-}|1\le i,j\le H\}+\im\partial^T\cap C_+.
\end{split}
\end{equation}

Let $h\in\ker\partial\cap C_+$ be a cycle with vanishing reduced matrix. Then
\begin{equation}
h=\sum_{i,j=1}^H x^+_{i,j}h^i_{1+}\otimes h^j_{2+}+\sum_{i,j=1}^H x^-_{i,j}h^i_{1-}\otimes h^j_{2-}+\omega,
\end{equation}
where $\omega\in\im\partial\cap C_+$ and $x^\pm_{i,j}=(\bar{h}^i_{1\pm}\otimes\bar{h}^j_{2\pm},h)\in\mathbb{Z}_D$. Since $\bar{h}^i_{1\pm}\otimes\bar{h}^j_{2\pm}$ is supported on the reduced matrix where $h$ vanishes, $x^\pm_{i,j}=0$, therefore $h=\omega\in\im\partial$.

The proof for cocycles is the same with the roles of $\delta$ and $\delta^T$ reversed.
\end{proof}

\begin{defn}
We let $E_{a,b,r}^{A,B,R}$ denote the number of rank $R$ matrices of size $A\times B$ which extend an (arbitrary) rank $r$ matrix of size $a\times b$ over $\mathbb{Z}_D$.
\end{defn}

\begin{lem}\label{lem:Ebound}
\begin{equation}
E^{A,B,R}:=E_{0,0,0}^{A,B,R}=\Theta(1)\cdot D^{(A+B)R-R^2},
\end{equation}
and
\begin{equation}
E_{a,b,r}^{A,B,R}=O(1)\cdot D^{(A+B-b)R-br-R^2+(b-a+r+R)^2/4}.
\end{equation}
\end{lem}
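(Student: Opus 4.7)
The plan is to prove the two claims separately and combine. For the first, $E^{A,B,R}=\Theta(D^{(A+B)R-R^2})$, the approach is the standard $UV$-decomposition of rank-$R$ matrices: writing $M=UV$ with $U\in\mathbb{Z}_D^{A\times R}$ of full column rank and $V\in\mathbb{Z}_D^{R\times B}$ of full row rank, unique up to the $\GL_R$-action $(U,V)\mapsto(UG,G^{-1}V)$, gives
\[
E^{A,B,R}=\frac{1}{|\GL_R(\mathbb{Z}_D)|}\prod_{i=0}^{R-1}(D^A-D^i)(D^B-D^i),
\]
and the desired asymptotic follows from $|\GL_R(\mathbb{Z}_D)|=\Theta(D^{R^2})$ together with $\prod_{i=0}^{R-1}(D^X-D^i)=\Theta(D^{RX})$ whenever $X\ge R$.

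For the second claim I would first reduce to $r=0$. Since $\GL_a\times\GL_b$ acts on the prescribed submatrix by row and column operations that extend to rank-preserving operations on the whole $A\times B$ matrix, the count $E_{a,b,r}^{A,B,R}$ is independent of the particular rank-$r$ submatrix, and I may assume it to be $\bigl(\begin{smallmatrix}I_r & 0\\ 0 & 0\end{smallmatrix}\bigr)$ in the top-left. Writing
\[
M=\begin{pmatrix}I_r & 0 & Y_1\\ 0 & 0 & Y_2\\ Z_1 & Z_2 & W\end{pmatrix}
\]
and using the pivots of $I_r$ to clear $Y_1$ and $Z_1$ by rank-preserving row and column operations shows $\rk M=r+\rk\bigl(\begin{smallmatrix}0 & Y_2\\ Z_2 & W-Z_1Y_1\end{smallmatrix}\bigr)$. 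Treating $(Y_1,Z_1)$ as free parameters and changing variable from $W$ to $W':=W-Z_1Y_1$ then produces the identity
\[
E_{a,b,r}^{A,B,R}=D^{r(A+B-a-b)}\,E_{a-r,b-r,0}^{A-r,B-r,R-r}.
\]

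To bound the reduced quantity, I would write $M=\bigl(\begin{smallmatrix}0 & Y\\ Z & W\end{smallmatrix}\bigr)$ and partition by $\rho=\rk Z$. The number of $Z$ of rank $\rho$ is $\Theta(D^{\rho(A-a+b-\rho)})$ by the first claim; given $Z$, the column space of $M$ is an $R$-dimensional superspace of the $\rho$-dimensional span of the first $b$ columns, and the number of such superspaces is $\binom{A-\rho}{R-\rho}_D=O(D^{(R-\rho)(A-R)})$; once this space is fixed the last $B-b$ columns contribute at most $D^{R(B-b)}$ further possibilities. Multiplying and summing,
\[
E_{a,b,0}^{A,B,R}=O(1)\cdot D^{(A+B-b)R-R^2}\sum_{\rho}D^{\rho(R+b-a-\rho)}.
\]
The quadratic $\rho(R+b-a-\rho)$ has unconstrained maximum $(R+b-a)^2/4$ at $\rho=(R+b-a)/2$, and the elementary inequality $m(u-m)\le u^2/4$ ensures this survives as an upper bound when $\rho$ is restricted to its admissible range. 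Substituting the resulting bound back into the reduction completes the argument.

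The hard part is not conceptual---both the counting and the quadratic optimisation are standard---but rather algebraic bookkeeping: one must verify that the parameter shift $(a,b,r,A,B,R)\mapsto(a-r,b-r,0,A-r,B-r,R-r)$ combined with the prefactor $D^{r(A+B-a-b)}$ exactly reproduces $(A+B-b)R-br-R^2+(b-a+r+R)^2/4$. The key identity that makes the two forms agree is $(b-a+r+R)^2-(b-a+R-r)^2=4r(b-a+R)$.
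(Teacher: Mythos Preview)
Your argument is correct and is essentially the standard Bravyi--Hastings count that the paper cites: the $UV$-factorisation for $E^{A,B,R}$, the reduction $E_{a,b,r}^{A,B,R}=D^{r(A+B-a-b)}E_{a-r,b-r,0}^{A-r,B-r,R-r}$ via Gaussian elimination on the $I_r$ pivots, and the partition of $E_{a,b,0}^{A,B,R}$ by the rank of the lower-left block followed by a quadratic optimisation in that rank. The only point you leave slightly implicit is why the sum $\sum_\rho D^{\rho(R+b-a-\rho)}$ is $O(1)\cdot D^{(R+b-a)^2/4}$ rather than merely $O(n)$ times that; this follows because the exponent is a concave quadratic in $\rho$, so $\sum_{\rho\in\mathbb{Z}}D^{-(\rho-u/2)^2}$ is bounded by a constant depending only on $D$.
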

\begin{proof}
The proof is very similar to that in \cite[Appendix A]{BravyiHastings} (Proposition 1 in \cite{BravyiHastingsACM}).
\end{proof}

\begin{lem}\label{lem:Zbound}
Let $\delta_1,\delta_2$ be boundary operators with $\dim\im\delta_a=2L$ and $\dim\ker\delta_a=2(L+H)$ and let $\partial=\delta_1\otimes I_2+P_1\otimes\delta_2$. Let $Z_{H,L}(r_+,r_-)$ be the number of $h\in\ker\partial\cap C_+$ such that $\rk h_\pm=r_\pm$. Then $Z_{H,L}(r_+,r_-)$ only depends on $r_+,r_-,H$ and $L$ and
\begin{equation}
Z_{H,L}(r_+,r_-)\le O(n)\cdot D^{2(H+L)(r_++r_-)-(r_+^2+r_-^2)}\sum_{l=0}^{2L} D^{-l^2+(r_++r_--2H)l}.
\end{equation}
\end{lem}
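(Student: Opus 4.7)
The plan is to first reduce to the standard case. Any admissible boundary operator is of the form $U\,\mathrm{diag}(0,\delta_0;\delta_0,0)\,U^{-1}$ for some block-diagonal $U$ commuting with $P$, and such conjugations induce a bijection on $\ker\partial\cap C_+$ that preserves the ranks of $h_+$ and $h_-$. Hence $Z_{H,L}(r_+,r_-)$ depends only on $H$ and $L$ and may be computed with $\delta_1=\delta_2=\delta_0$.

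Next, solve the cycle condition. Write $h\in C_+$ as $h=h_+\oplus h_-$ with $h_\pm\in C_{1\pm}\otimes C_{2\pm}$ viewed as $n\times n$ matrices. Then $\partial h=0$ is equivalent to $\delta_0 h_+=h_-\delta_0^T$ and $\delta_0 h_-=-h_+\delta_0^T$. Substituting the explicit form of $\delta_0$ and decomposing each $h_\pm$ into a $3\times 3$ block matrix with block sizes $(H,L,L)$, these equations force most blocks to vanish and couple $h_+$ and $h_-$ through two shared $L\times L$ blocks:
\begin{equation*}
h_+=\begin{pmatrix}a_+ & b_+ & 0\\ d_+ & e_+ & f_+\\ 0 & f_- & 0\end{pmatrix},\qquad h_-=\begin{pmatrix}a_- & b_- & 0\\ d_- & e_- & f_-\\ 0 & -f_+ & 0\end{pmatrix},
\end{equation*}
with independent free blocks $a_\pm$ ($H\times H$), $b_\pm$ ($H\times L$), $d_\pm$ ($L\times H$), $e_\pm$ ($L\times L$) and shared blocks $f_\pm$ ($L\times L$).

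With this parametrization, I would count tuples with $\rk h_\pm=r_\pm$ by introducing an auxiliary rank parameter $l$ associated with the coupling blocks $f_\pm$ and applying Lemma \ref{lem:Ebound} repeatedly. Conceptually, the prefactor $D^{2(H+L)(r_++r_-)-(r_+^2+r_-^2)}$ arises because, once the coupling is stripped away, $h_\pm$ is essentially an $(H+L)\times(H+L)$ matrix of rank $r_\pm$, whose enumeration gives exactly $\Theta(D^{2(H+L)r_\pm-r_\pm^2})$ by Lemma \ref{lem:Ebound}. The sum $\sum_{l=0}^{2L} D^{-l^2+(r_++r_--2H)l}$ comes from conditioning on the rank contributed by the shared $f_\pm$ blocks; each summand is of the form $D^{-R^2+(\cdots)R}$ which matches the shape of the extension count given by Lemma \ref{lem:Ebound}. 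The factor $O(n)$ absorbs the polynomial number of ways of splitting $l$ among the two coupling blocks.

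The main obstacle is carrying out the extension counting precisely: because the rank of $h_\pm$ does not split additively into ranks of the free top-left block and of $f_\pm$, one must identify the correct assignment of Lemma \ref{lem:Ebound}'s parameters $(a,b,r,A,B,R)$ that captures the combined effect of the shared border and the free interior. The key calculation will be showing that once this identification is made and the exponents simplified, the $+\tfrac14(b-a+r+R)^2$ correction in Lemma \ref{lem:Ebound} produces exactly the quadratic exponent $-l^2+(r_++r_--2H)l$ appearing in the stated bound.
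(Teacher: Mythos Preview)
Your reduction to the standard boundary and the explicit block parametrization of $\ker\partial\cap C_+$ are both correct and match the paper exactly (with $f_+\leftrightarrow F$, $f_-\leftrightarrow G$). The gap is in the counting step. You write that the rank of $h_\pm$ does not split additively into the contribution of the coupling blocks and of the free interior, and propose to compensate via the correction term in $E_{a,b,r}^{A,B,R}$. But the fixed data in $h_\pm$ --- two off-diagonal $L\times L$ coupling blocks together with three forced zero corners --- is not an extension of an upper-left $a\times b$ block in the sense of Lemma~\ref{lem:Ebound}, so there is no evident assignment of the parameters $(a,b,r,A,B,R)$, and no reason the $\tfrac14(b-a+r+R)^2$ term should collapse to exactly $-l^2+(r_++r_--2H)l$. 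As stated, this is a hope rather than a method.

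The paper sidesteps the difficulty by showing the rank \emph{does} split. Fixing $f=\rk F$ and $g=\rk G$ separately, it conjugates $h_\pm$ by block-diagonal invertible matrices (acting on the second and third block rows/columns) so that $F$ and $G$ are supported on their upper-left $f\times f$ and $g\times g$ corners. One then checks that deleting the $f+g$ rows and $f+g$ columns carrying these pivots lowers $\rk h_\pm$ by exactly $f+g$, independently of $A_\pm,B_\pm,C_\pm,D_\pm$. Consequently the entries in the deleted rows and columns are completely free (contributing the factor $D^{2(f+g)(H+L)-2fg}$), and what remains is an unconstrained $(H+L-f)\times(H+L-g)$ matrix of rank $r_\pm-f-g$, counted by the elementary formula $E^{A,B,R}=\Theta(1)D^{(A+B)R-R^2}$. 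Multiplying these factors together with $E^{L,L,f}E^{L,L,g}$ for the choice of $F,G$ and summing over $(f,g)$ with $l=f+g$ gives the claimed exponent directly; the $O(n)$ absorbs the number of pairs $(f,g)$ with a given sum. Only the first estimate of Lemma~\ref{lem:Ebound} is used here --- the extension correction never enters this proof.
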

\begin{proof}
Since $\delta_1$ and $\delta_2$ are conjugates of the standard boundary operator $\delta$ and left and right multiplication by a block diagonal matrices $U_1,U_2$ does not change the ranks, we may assume that $\delta_a$ are the standard ones. In this case $C_+\cap\ker\partial$ is equal to the set of matrices of the following form:
\begin{equation}
h=\begin{bmatrix}
h_+ & 0  \\
0 & h_-
\end{bmatrix}
\text{ where }
h_+=\begin{bmatrix}
A_+ & B_+ & 0  \\
C_+ & D_+ & F  \\
0 & G & 0
\end{bmatrix}
\text{, }
h_-=\begin{bmatrix}
A_- & B_- & 0  \\
C_- & D_- & G  \\
0 & -F & 0
\end{bmatrix}.
\end{equation}

To count the number of such matrices with given pair of ranks, let us first fix $F$ and $G$ with $f=\rk F$ and $g=\rk G$. These ranks must satisfy $0\le f+g\le\min\{r_+,r_-\}$ and $f,g\le L$, but are otherwise completely arbitrary. Let $U,V,X,Y$ be invertible $L\times L$ matrices such that $UFY$ and $VGX$ have nonzero elements only in their upper left $f\times f$ and $g\times g$ corners, respectively. Apply the transformations
\begin{equation}
h_+\mapsto\begin{bmatrix}
I & 0 & 0  \\
0 & U & 0  \\
0 & 0 & V
\end{bmatrix}
h_+\begin{bmatrix}
I & 0 & 0  \\
0 & X & 0  \\
0 & 0 & Y
\end{bmatrix}
\text{, }
h_-\mapsto\begin{bmatrix}
I & 0 & 0  \\
0 & V & 0  \\
0 & 0 & U
\end{bmatrix}
h_-\begin{bmatrix}
I & 0 & 0  \\
0 & Y & 0  \\
0 & 0 & X
\end{bmatrix}.
\end{equation}
This does not change the block structure or the ranks. Removing the rows and columns from $h_+$ and $h_-$ corresponding to the nonempty rows and columns of the new $F$ and $G$ decreases both ranks by $f+g$ independently of $A_\pm,B_\pm,C_\pm,D_\pm$. The remaining nonzero matrices have sizes $(H+L-f)\times(H+L-g)$ and $(H+L-g)\times(H+L-f)$. These can be chosen arbitrarily as long as their ranks are $r_\pm-f-g$. Summing over the possible choices gives
\begin{equation}
\begin{split}
Z_{H,L}(r_+,r_-)
 = &\sum_{\substack{f,g=0  \\ f+g\le\min\{r_+,r_-\}}}^L E^{L,L,f}E^{L,L,g}D^{2(f+g)(H+L)-2fg}  \\
 & \cdot E^{H+L-f,H+L-g,r_+-f-g}E^{H+L-g,H+L-f,r_--f-g}  \\
 = &O(1)\cdot D^{2(H+L)(r_++r_-)-r_+^2-r_-^2}  \\
 & \cdot\sum_{\substack{f,g=0  \\ f+g\le\min\{r_+,r_-\}}}^L D^{-(f+g)^2+(f+g)(r_++r_--2H)}.
\end{split}
\end{equation}
Introducing $l=f+g$, we may replace the sum over $f$ and $g$ with $n$ times a sum over $l$ with $0\le l\le\min\{2L,r_+,r_-\}$ to get an upper bound.
\end{proof}

\begin{lem}\label{lem:gammasum}
Let $\delta_1,\delta_2$ be good boundary operators with $C_{i\pm}=\mathbb{Z}_D^n$ and homological dimension $H+H$. Let $\Gamma(R_+,R_-)$ be the number of reduced cycles $(h_+,h_-)$ with ranks $\rk h_+=R_+$ and $\rk h_-=R_-$. Then
\begin{equation}
\Gamma(R_+,R_-)=\sum_{r_+=0}^{\min\{K,R_+\}}\sum_{r_-=0}^{\min\{K,R_-\}}Z_{H,L-(n-n')}(r_+,r_-)E_{K,K,r_+}^{n',n',R_+}E_{K,K,r_-}^{n',n',R_-},
\end{equation}
where $K=2n'-n$.
\end{lem}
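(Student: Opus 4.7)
My plan is to set up a bijection between reduced cycles of the product complex and pairs consisting of a cycle of the product of the two reduced complexes together with an arbitrary extension to an $n'\times n'$ block. The counting then assembles from Lemma~\ref{lem:Zbound} applied to the reduced complex and the definition of $E_{a,b,r}^{A,B,R}$.

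First I would apply Lemma~\ref{lem:reducedcomplex} to each factor $(C_i,\delta_i,P_i)$, obtaining reduced chain complexes with involution $(\mathcal{V}'_i,\delta'_i,P'_i)$. By the unnamed lemma preceding the present one, the homological dimension is unchanged, $L$ is replaced by $L-(n-n')$, and $\dim\mathcal{V}'_{i\pm}=2n'-n=K$, which already fixes the parameters in the factor $Z_{H,L-(n-n')}$. I would then pick bases of $\mathcal{V}_{i\pm}$ adapted to a direct sum decomposition $\mathcal{V}_{i\pm}=T_{i\pm}\oplus S^>_{i\pm}$ with $T_{i\pm}$ a fixed section of $\mathcal{V}'_{i\pm}$. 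In such a basis every $n'\times n'$ reduced block $h_\pm\in\mathcal{V}_{1\pm}\otimes\mathcal{V}_{2\pm}$ splits into four subblocks $h^{KK}_\pm,h^{KS}_\pm,h^{SK}_\pm,h^{SS}_\pm$, and the projection $\varphi_1\otimes\varphi_2$ to $\mathcal{V}'_1\otimes\mathcal{V}'_2$ simply reads off $h^{KK}_\pm$.

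The crux is the claim that $(h_+,h_-)$ is a reduced cycle if and only if $(h^{KK}_+,h^{KK}_-)$ is a cycle in the $+$-sector of the reduced product complex, while the three remaining subblocks of each $h_\pm$ are completely unconstrained. The forward direction follows immediately from $\varphi_i$ being a chain map commuting with $P_i$ (Lemma~\ref{lem:reducedcomplex}), so that $\varphi_1\otimes\varphi_2$ carries $\ker\partial\cap C_+$ into cycles of $\partial'$ in the $+$-sector. The reverse direction is a lifting argument: given any cycle representative $(h^{KK}_+,h^{KK}_-)$ in the reduced product complex and any choice of the other subblocks, I would produce $\psi\in\ker\partial\cap C_+$ whose restriction to the first $n'$ coordinates of each sector agrees with $h$. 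This is carried out by solving $\partial\tilde{\psi}=-\partial h$ for a correction $\tilde{\psi}$ supported in $(\mathcal{V}\otimes\mathcal{V}^>)\oplus(\mathcal{V}^>\otimes\mathcal{V})\oplus(\mathcal{V}^>\otimes\mathcal{V}^>)$, exploiting the standard form of $\delta_0$ in eq.~\eqref{eq:standardboundary} and the goodness of $\delta_1,\delta_2$, in the same spirit as the lift in \cite[Lemma~10]{BravyiHastings} (Lemma~11 in \cite{BravyiHastingsACM}).

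Once the bijection is in place the count is routine. For each rank pair $(r_+,r_-)$ with $0\le r_\pm\le\min\{K,R_\pm\}$, Lemma~\ref{lem:Zbound} applied to the product reduced complex gives $Z_{H,L-(n-n')}(r_+,r_-)$ cycles in the $+$-sector of ranks $(r_+,r_-)$, and for each such cycle the definition of $E$ supplies $E^{n',n',R_+}_{K,K,r_+}\cdot E^{n',n',R_-}_{K,K,r_-}$ extensions of ranks $(R_+,R_-)$. Multiplying and summing over the allowed ranks (the summand vanishes outside the indicated range because $E^{n',n',R}_{K,K,r}=0$ there) yields the formula. The main obstacle is the backward direction of the bijection: one has to ensure that the lift lives in $C_+$ rather than only in $C$, and that the correction $\tilde{\psi}$ exists irrespective of the arbitrary block extensions, which requires a careful sector-tracking version of the original single-sector argument.
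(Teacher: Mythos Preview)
Your proposal is correct and follows essentially the same approach as the paper: both pass to the reduced complexes via $\varphi_1\otimes\varphi_2$, invoke the lifting argument of \cite[Lemma~10]{BravyiHastings} to identify reduced cycles with arbitrary $\mathcal{V}_+$-preimages of cycles in $\ker\partial'\cap\mathcal{V}'_+$, and then count by stratifying over the rank pair $(r_+,r_-)$ of the image to obtain the $Z\cdot E\cdot E$ sum. Your coordinate description via the section $T_{i\pm}$ is a bit more explicit than the paper's, but the content is identical; note only that the ACM correspondence for Lemma~10 of the arXiv version is not Lemma~11 (that is the analogue of the paper's dimension lemma), so you may want to double-check that citation.
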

\begin{proof}
From lemma \ref{lem:reducedcomplex} we get a map $\varphi\otimes\varphi:C\to\mathcal{V}'_1\otimes\mathcal{V}'_2$ with $C=C_1\otimes C_2$ and $\mathcal{V}'_i$ as in the lemma. The proof of \cite[Lemma 10]{BravyiHastings} works in our case without modification, and it implies (after restricting to the subspaces $C_+=C_{1+}\otimes C_{2+}\oplus C_{1-}\otimes C_{2-}$ and $\mathcal{V}'_+=\mathcal{V}'_{1+}\otimes \mathcal{V}'_{2+}\oplus \mathcal{V}'_{1-}\otimes \mathcal{V}'_{2-}$) that
\begin{equation}
\Gamma(R_+,R_-)=\sum_{h\in \mathcal{V}'_+\cap\ker\partial'}|\{g_+\oplus g_-\in\mathcal{V}_+|\rk g_\pm=R_\pm\text{ and }(\varphi\otimes\varphi)g=h\}|.
\end{equation}
A similar argument as in the qubit case allows us to rewrite the above expression as
\begin{equation}
\begin{split}
\Gamma(R_+,R_-)
 & =\sum_{r_\pm=0}^{\min\{K,R_\pm\}}|\{h\in\mathcal{V}'_+\cap\ker\partial'|\rk h_\pm=r_\pm\}|\cdot E_{K,K,r_+}^{n',n',R_+}E_{K,K,r_-}^{n',n',R_-}  \\
 & =\sum_{r_+=0}^{\min\{K,R_+\}}\sum_{r_-=0}^{\min\{K,R_-\}}Z_{H,L-(n-n')}(r_+,r_-)E_{K,K,r_+}^{n',n',R_+}E_{K,K,r_-}^{n',n',R_-}.
\end{split}
\end{equation}
\end{proof}

Let $\mathcal{Z}_{R_+,R_-}(U_1,U_2)$ denote the set of reduced cycles for $\partial=\delta_1\otimes I_2+P_1\otimes\delta_2$ with block ranks $R_\pm$, where $U_i$ are block diagonal with two $n\times n$ blocks and
\begin{equation}
\delta_i=U_i
\begin{bmatrix}
0 & \delta_0  \\
\delta_0 & 0
\end{bmatrix}U_i^{-1}.
\end{equation}
By definition, $|\mathcal{Z}_{R_+,R_-}(U_1,U_2)|=\Gamma(R_+,R_-)$ when the $\delta_i$ are good.
\begin{lem}\label{lem:goodparam}
It is possible to parameterize the sets $\mathcal{Z}_{R_+,R_-}(U_1,U_2)$ with integers $j=1,\ldots,\Gamma(R_+,R_-)$ in such a way that conditioned on $\delta_1,\delta_2$ being good, the distribution of the $j$th reduced cycle is uniform on the set of pairs of $n'\times n'$ matrices of ranks $(R_+,R_-)$.
\end{lem}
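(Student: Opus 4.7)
The plan is an equivariance argument leveraging a large symmetry group acting on the parameters $(U_1, U_2)$. Introduce the group $\mathcal{K} = \GL_{n'}(\mathbb{Z}_D)^4$ with elements $k = (A_{1+}, A_{1-}, A_{2+}, A_{2-})$, acting on $(U_{1+}, U_{1-}, U_{2+}, U_{2-})$ by left multiplication with the block-diagonal lifts $\tilde{A}_{i\pm} = \begin{pmatrix} A_{i\pm} & 0 \\ 0 & I_{n-n'} \end{pmatrix}$. This action is free and preserves the uniform distribution on $\GL_n^4$, since left multiplication by any fixed invertible matrix preserves the uniform measure on $\GL_n$.

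The key step is to track how this action affects goodness and reduced cycles. A vector in $C_\pm$ supported on the last $n-n'$ coordinates is fixed pointwise by $\tilde{A}_{i\pm}^{-1}$ (which is block diagonal with identity on the tail), so it lies in $\ker \delta_i^{\text{new}} = \tilde{A}_i \ker \delta_i$ iff it lies in $\ker \delta_i$; hence goodness is $\mathcal{K}$-invariant. Since each $\tilde{A}_i$ commutes with $P_i$, the new $\partial$ is the conjugate of the old one by $\tilde{A}_1 \otimes \tilde{A}_2$, so a cycle $h \in \ker \partial \cap C_+$ has its two blocks transformed by $h_\pm \mapsto \tilde{A}_{1\pm} h_\pm \tilde{A}_{2\pm}^T$. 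The block-diagonal form of $\tilde{A}_{i\pm}$ then ensures that the upper-left $n' \times n'$ block of this product equals $A_{1\pm} g_\pm A_{2\pm}^T$, where $g_\pm$ is the upper-left block of $h_\pm$, with no contamination from the other blocks. Thus the induced action of $\mathcal{K}$ on reduced cycles is the standard $\GL_{n'}^4$-action by two-sided multiplication.

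Now define the parametrization by choosing a section of the free $\mathcal{K}$-action on the set of good parameters $(U_1, U_2)$, enumerating the $\Gamma(R_+, R_-)$ reduced cycles of each representative arbitrarily with labels $j = 1, \ldots, \Gamma(R_+, R_-)$, and extending to all good $(U_1, U_2)$ by $\mathcal{K}$-equivariance. Under this parametrization, the $j$-th reduced cycle is a $\mathcal{K}$-equivariant function from good $(U_1, U_2)$ to pairs of $n' \times n'$ matrices. Since the uniform distribution on good $(U_1, U_2)$ is $\mathcal{K}$-invariant and $\GL_{n'}^4$ acts transitively on pairs of rank $(R_+, R_-)$, the pushforward is uniform on that set.

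The main obstacle is the conceptual step of identifying a subgroup of $\GL_n$-transformations that simultaneously (i) preserves the distribution of $(U_1, U_2)$, (ii) preserves the goodness condition, and (iii) induces the natural $\GL_{n'}^4$-action on reduced cycles. The block-diagonal-with-identity-on-the-tail form of $\tilde{A}_{i\pm}$ is essentially forced by these requirements: an upper-triangular lift would contaminate the upper-left block during the transformation $h_\pm \mapsto \tilde{A}_{1\pm} h_\pm \tilde{A}_{2\pm}^T$ and would fail to preserve goodness. Once this structure is pinned down, the rest is a standard equivariance/pushforward argument combined with the elementary fact that $\GL_{n'}^4$ acts transitively on pairs of $n' \times n'$ matrices of fixed ranks.
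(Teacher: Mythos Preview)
Your proposal is correct and follows essentially the same approach as the paper: both introduce the subgroup $\GL_{n'}(\mathbb{Z}_D)^4$ embedded block-diagonally (with identity on the last $n-n'$ coordinates) acting by left multiplication on the $U_{i\pm}$, observe that this preserves goodness and sends reduced cycles to reduced cycles via the two-sided $\GL_{n'}$-action $g_\pm\mapsto A_{1\pm}g_\pm A_{2\pm}^T$, and then build the parameterization by fixing it arbitrarily on one representative per coset and extending equivariantly. The paper's write-up is terser, but your more explicit verification of the block transformation and the invariant-pushforward/transitivity argument is exactly the content the paper leaves implicit.
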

\begin{proof}
Let $\delta_i=V_i
\begin{bmatrix}
0 & \delta_0  \\
\delta_0 & 0
\end{bmatrix}V_i^{-1}$ be good boundary operators. Consider the subgroup $\GL(n',\mathbb{Z}_D)^2\le\GL(n,\mathbb{Z}_D)^2$ of matrices
\begin{equation}
U=
\begin{bmatrix}
U'_+ & 0 & 0 & 0  \\
0 & I & 0 & 0  \\
0 & 0 & U'_- & 0  \\
0 & 0 & 0 & I
\end{bmatrix},
\end{equation}
where the block sizes are $n',n-n',n',n-n'$. If $U_i$ are such matrices, let $U'_i=U'_{i+}\oplus U'_{i-}$. Since $\delta_i$ is good and conjugation by $U_i$ does not affect the last $n-n'$ coordinates, $U_i\delta_i U_i$ is also good and $\mathcal{Z}_{R_+,R_-}(U_1V_1,U_2V_2)=(U'_1\otimes U'_2)\mathcal{Z}_{R_+,R_-}(V_1,V_2)$.

Choose an arbitrary parameterization $f_{V_1,V_2}:[\Gamma(R_+,R_-)]\to\mathcal{Z}_{R_+,R_-}(V_1,V_2)$ and let $f_{U_1V_1,U_2V_2}(j)=(U'_1\otimes U'_2)f_{V_1,V_2}(j)$ for all $U_1,U_2$ as above. Doing this for one representative $(V_1,V_2)$ in each coset gives a parameterization for all good pairs which has the desired properties.
\end{proof}

\begin{lem}\label{lem:uniformlowprob}
For any $\epsilon>0$ there is a $c'>0$ such that for any $1\le R\le n'$ the probability that a random $n'\times n'$ matrix $Z$ chosen uniformly from the rank $R$ matrices has row and column weights at most $c'$ is $O(1)\cdot D^{R^2-2(1-\epsilon)n'R}$.
\end{lem}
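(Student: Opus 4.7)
By Lemma~\ref{lem:Ebound}, the total number of rank-$R$ matrices of size $n' \times n'$ is $\Theta(1)\cdot D^{2n'R - R^2}$, so the target bound is equivalent to showing that the number of rank-$R$ matrices satisfying the uniform low-weight condition is at most $O(1)\cdot D^{2\epsilon n' R}$.

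My plan is to use a pivot-based parameterization. For each such $Z$, select canonically (say lexicographically leftmost) index sets $I, J\subseteq[n']$ of size $R$ with $Z_{I,J}$ invertible. Then $Z = A M^{-1} B$ with $A = Z_{\cdot,J}$ of size $n'\times R$, $B = Z_{I,\cdot}$ of size $R\times n'$, and $M = Z_{I,J}\in\GL(R,\mathbb{Z}_D)$. Crucially, the columns of $A$ and rows of $B$ are columns and rows of $Z$, so each has weight at most $c'n'$. The number of vectors in $\mathbb{Z}_D^{n'}$ of weight at most $c'n'$ is $N_{c'} = \sum_{w=0}^{\lfloor c'n'\rfloor}\binom{n'}{w}(D-1)^w \le O(1)\cdot D^{H_D(c')n'}$, where $H_D(c') = -c'\log_D c' - (1-c')\log_D(1-c') + c'\log_D(D-1)$ tends to $0$ as $c'\to 0$.

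For $R$ not too large (say $R \le 2(\epsilon - H_D(c'))n'$), a direct count suffices: bounding the number of $A$'s and $B$'s by $N_{c'}^R$ each, the pivot block $M$ by $|\GL(R,\mathbb{Z}_D)|=\Theta(D^{R^2})$, and the choice of $(I, J)$ by $\binom{n'}{R}^2$, one obtains $\mathrm{poly}(n')\cdot D^{R^2 + 2R n' H_D(c')} \le O(1)\cdot D^{2\epsilon n' R}$. For larger $R$, the $D^{R^2}$ factor is too large and I would use the additional constraint that the non-pivot rows and columns of $Z = A M^{-1} B$ are also low-weight. Partitioning by the column space $U = \im A$, the number of valid non-pivot columns of $Z$ is at most $N(U)^{n'-R}$ with $N(U) = |\{u\in U:w(u)\le c'n'\}|$. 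Although $N(U)$ can be as large as $D^R$ for structured $U$ like coordinate subspaces, its expected value over a uniformly random $R$-subspace is only $D^{R - (1-H_D(c'))n'}$, so a moderate-deviations estimate bounding the number of $R$-subspaces with $N(U) \ge D^t$ for each threshold $t$ allows one to control $\sum_A N(\im A)^{n'-R}$ and obtain the desired bound.

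\textbf{Main obstacle.} The central challenge is that the target bound $D^{R^2 - 2(1-\epsilon)n'R}$ is effectively tight both for small $R$ and for $R \asymp n'$, so a single naive count does not cover the whole range. The most delicate step is controlling ``bad'' subspaces $U$ with $N(U)$ far above the average: since $U = \im A$ is the span of $R$ low-weight columns this imposes some restriction on $U$, but not a strong enough one to rule out the structured subspaces where $N(U) = D^R$. I expect this argument to parallel \cite[Lemma 7]{BravyiHastings}, with the adjustment from $\mathbb{Z}_2$ to $\mathbb{Z}_D$ being essentially cosmetic: the binary entropy is replaced by $H_D$ and a few constants are updated.
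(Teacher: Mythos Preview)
Your instinct that the passage from $\mathbb{Z}_2$ to $\mathbb{Z}_D$ is cosmetic is correct, and the pivot decomposition $Z=Z_{\cdot,J}\,(Z_{I,J})^{-1}\,Z_{I,\cdot}$ is precisely the device used in \cite[Lemma~14]{BravyiHastings}, to which the paper simply defers. The gap in your proposal is an overcount that manufactures a difficulty where there is none. In your parameterization $Z=AM^{-1}B$ with $A=Z_{\cdot,J}$, $B=Z_{I,\cdot}$, $M=Z_{I,J}$, the block $M$ is the $I$-row submatrix of $A$ (and simultaneously the $J$-column submatrix of $B$); it is therefore \emph{determined} once $A$ is chosen, not an independent parameter, and the factor $|\GL(R,\mathbb{Z}_D)|\sim D^{R^2}$ must be dropped. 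With $(I,J,A)$ fixed, only the $J^c$-columns of $B$ remain free, so
\[
\#\{Z:\rk Z=R,\ \text{row and column weights}\le c'n'\}\ \le\ \binom{n'}{R}^{2} N_{c'}^{2R}
\ \le\ D^{\,2R\left(H_D(c')\,n'+O(\log n')\right)},
\]
using $\binom{n'}{R}\le(en')^{R}$ and $N_{c'}\le n'\,D^{H_D(c')n'}$. Choosing $c'$ with $H_D(c')<\epsilon$ then gives the target bound $O(1)\cdot D^{2\epsilon n'R}$ for \emph{all} $1\le R\le n'$ simultaneously (for the finitely many $n'$ below the threshold where $O(\log n')\le(\epsilon-H_D(c'))n'$, the total count is bounded and absorbed into the $O(1)$).

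Consequently the ``main obstacle'' you describe---the split into a small-$R$ and large-$R$ regime, the moderate-deviations control of $N(U)=|\{u\in U:w(u)\le c'n'\}|$, the concern about structured subspaces with many low-weight vectors---is entirely an artifact of the spurious $D^{R^2}$ and does not arise in the actual argument. A related slip: $\binom{n'}{R}^{2}$ is not $\mathrm{poly}(n')$ once $R$ grows with $n'$; it must be handled via the exponential bound above rather than treated as a polynomial prefactor.
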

\begin{proof}
Similarly as \cite[Lemma 14]{BravyiHastings} (Lemma 6 in \cite{BravyiHastingsACM}).
\end{proof}

\begin{lem}\label{lem:Predgood}
Let $\delta_1,\delta_2$ be random boundary operators. Let $P_{\text{red}}^{\text{good}}$ denote the probability that there is a nonzero reduced cycle for $\partial=\delta_1\otimes I_2+P_1\otimes\delta_2$ satisfying the uniform low weight condition conditioned on $\delta_1$ and $\delta_2$ being good. Then for any $\epsilon>0$ and $0<r<1$ there is a $c>0$ such that
\begin{equation}
P_{\text{red}}^{\text{good}}\le O(n^6)\cdot D^{-2(1-2r-2\epsilon+2r\epsilon-\rho)n}.
\end{equation}
\end{lem}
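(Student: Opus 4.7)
The plan is to apply the union bound over all admissible block ranks $(R_+, R_-)$ of a nonzero reduced cycle. By Lemma~\ref{lem:goodparam}, conditioned on $\delta_1, \delta_2$ being good, the rank-$(R_+, R_-)$ reduced cycles are parameterized so that each one, as a pair $(h_+, h_-)$ of $n' \times n'$ matrices, is uniform on its rank class. Moreover, the two blocks are independent: the conjugating matrix $U'_1 \otimes U'_2$ acts on the tensor factors $\mathcal{V}'_{1\pm} \otimes \mathcal{V}'_{2\pm}$ as $U'_{1\pm} \otimes U'_{2\pm}$, and the four matrices $U'_{1\pm}, U'_{2\pm}$ are mutually independent uniform elements of $\GL(n', \mathbb{Z}_D)$. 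Lemma~\ref{lem:uniformlowprob} therefore applies to each block separately (with trivial probability~$1$ when $R_\pm = 0$, since the only rank-$0$ matrix is zero), giving
\begin{equation*}
P_{\text{red}}^{\text{good}} \le \sum_{(R_+, R_-) \ne (0, 0)} \Gamma(R_+, R_-) \cdot O(1) \cdot D^{R_+^2 + R_-^2 - 2(1-\epsilon) n' (R_+ + R_-)}.
\end{equation*}

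Next, I would expand $\Gamma(R_+, R_-)$ via Lemma~\ref{lem:gammasum} and substitute the bounds of Lemmas~\ref{lem:Ebound} and~\ref{lem:Zbound}. The resulting five-fold sum ranges over $(R_+, R_-, r_+, r_-, l)$ with each variable in an interval of size $O(n)$, and together with the $O(n)$ prefactor from Lemma~\ref{lem:Zbound} this accounts for the claimed $O(n^6)$ polynomial factor. Using the identities $2n' - K = n$ (from $K = 2n' - n$) and $H + L - (n - n') = n' - L$, each summand simplifies to $D^{E(R_+, R_-, r_+, r_-, l)}$ with
\begin{equation*}
E = 2(n' - L)(r_+ + r_-) - r_+^2 - r_-^2 - l^2 + (r_+ + r_- - 2H) l - K(s_+ + s_-) + 2 \epsilon n' (R_+ + R_-) + \tfrac{1}{4}(s_+^2 + s_-^2),
\end{equation*}
where $s_\pm := r_\pm + R_\pm$.

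The hard part will be the five-dimensional maximization of $E$ over the feasible domain $R_\pm \in [0, n']$ (not both zero), $r_\pm \in [0, \min\{K, R_\pm\}]$, and $l \in [0, 2L]$, and showing that the maximum is at most $-2(1 - 2r - 2 \epsilon + 2 r \epsilon - \rho) n$. I would first exploit the convexity of $E$ in each $R_\pm$ to push the optimum to a boundary $R_\pm \in \{0, n'\}$, then handle the concave parabola in $l$, and finally substitute $n = H + 2L$, $n' = (1-r) n$, and $H \le \rho n$. The remaining optimization in $r_\pm$ is one-dimensional and the extremum corresponds to the smallest admissible nontrivial configuration, namely $r_+ = R_+ = r_- = R_- = 1$ (with $l = 0$). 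Choosing $c$ (hence $c' = c r^{-1}/(1-r)$) small enough relative to $\epsilon$ in Lemma~\ref{lem:uniformlowprob} then closes the argument. The overall line of argument mirrors the qubit case of Bravyi and Hastings, with the key new ingredient being the independence of $h_+$ and $h_-$ under the conditioning, which lets the two-sector structure propagate cleanly through the analysis.
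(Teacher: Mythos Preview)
Your setup is exactly the paper's: union bound over block ranks via Lemma~\ref{lem:goodparam}, then Lemma~\ref{lem:uniformlowprob} on each block, then substitution of Lemmas~\ref{lem:gammasum}, \ref{lem:Zbound}, \ref{lem:Ebound}. Your closed form for the exponent $E$ is correct (note that the $-R_\pm^2$ from the $E$-bound cancels the $+R_\pm^2$ from the low-weight probability, and $n-2(1-\epsilon)n'=-K+2\epsilon n'$ is what produces the $-K(s_++s_-)$ term). So up to the point where the five-fold sum appears, you match the paper verbatim.

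The gap is in your maximization sketch. First, the constraint is $r_\pm\le R_\pm$, so convexity in $R_\pm$ pushes the optimum to $R_\pm\in\{r_\pm,n'\}$, not $\{0,n'\}$; this is what makes your claimed optimum $R_+=r_+=1$ consistent, but as written it is contradictory. Second, and more importantly, you never dispose of the boundary $R_\pm=n'$. The paper does this explicitly: after reducing to $R_+=\max\{1,r_+\}$ it splits according to the sign of $H+l+n-2n'(1-\epsilon)$, keeps three candidate maxima, and then maximizes over $l$ to get three terms of orders $D^{-\Theta(n)}$, $D^{-\Theta(n^2)}$, $D^{-\Theta(n)}$, the last of which is the one in the statement. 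Your sketch jumps directly to the configuration $(1,1,1,1,0)$ and asserts it is extremal; that value does give the claimed exponent, but without ruling out $R_\pm=n'$ (and the intermediate case where the $l$-vertex is interior, at $l=n'-H$) the argument is incomplete. The paper's route---factor the $(r_+,R_+)$ and $(r_-,R_-)$ sums as a square, locate the minimum of the convex $R_+$-parabola to the right of the domain so that $R_+=\max\{1,r_+\}$ wins, then case-split on $l$---is what actually closes this.
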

\begin{proof}
Choose a parameterization of the reduced cycles as in lemma \ref{lem:goodparam}. Then the $j$th reduced cycle in $\mathcal{Z}_{R_+,R_-}(U_1,U_2)$ is distributed uniformly on the set of pairs of $n'\times n'$ matrices with ranks $R_\pm$. If $c$ is small enough, then lemma \ref{lem:uniformlowprob} implies that the probability that the $j$th reduced cycle satisfies the uniform low weight condition (with $c'=cr^{-1}/(1-r)$) is $O(1)\cdot D^{R_+^2-2(1-\epsilon)n'R_+}D^{R_-^2-2(1-\epsilon)n'R_-}$. Summing over every reduced cycle gives
\begin{equation}
P_{\text{red}}^{\text{good}}\le O(1)\sum_{R_\pm=1}^{n'}\Gamma(R_+,R_-)D^{R_+^2+R_-^2-2(1-\epsilon)n'(R_++R_-)}.
\end{equation}

Next we plug in the bounds for $\Gamma,Z$ and $E$ from lemmas \ref{lem:gammasum}, \ref{lem:Zbound} and \ref{lem:Ebound}. Then we extend the sums over $r_\pm$ to $R_\pm$ and exchange the order of the sums to get
\begin{multline}
P_{\text{red}}^{\text{good}}\le O(n)\sum_{l=0}^{2n'-n-H}D^{-l^2-2Hl}  \\
\cdot\left(\sum_{r_+=0}^{n'}\sum_{R_+=\max\{1,r_+\}}^{n'}D^{R_+^2-2(1-\epsilon)n'R_+}D^{nR_+-\frac{3}{4}R_+^2}D^{(H+R_+/2)r_+-\frac{3}{4}r_+^2+lr_+}\right)^2,
\end{multline}
where the square appears because the sum over $(r_-,R_-)$ gives the same factor.

The exponent in the inner sum is quadratic, for a fixed $r_+$ it has a minimum at $R_+=2n-r_+-4n(r+\epsilon-r\epsilon)$, which is larger than the upper endpoint, therefore the maximum is at $R_+=\max\{1,r_+\}$. If $r_+=0$ then we get $\frac{1}{4}+n-2n'(1-\epsilon)$, while $r_+>0$ gives $(H+l+n-2n'(1-\epsilon))r_+$. If $l>-H-n+2n'(1-\epsilon)$ this is increasing, therefore the maximum is at $r_+=n'$, otherwise it is decreasing and the maximum is at $r_+=1$. We get an upper bound by keeping the three possible maxima and counting the number of terms ($O(n^2)$). The square of a sum is less than the sum of squares times the number of terms, which is bounded, therefore
\begin{multline}
P_{\text{red}}^{\text{good}}\le O(n^5)\sum_{l=0}^{2n'-n-H}D^{-l^2-2Hl}\Big(D^{\frac{1}{2}+2n-4n'(1-\epsilon)}+D^{2(H+l+n-2n'(1-\epsilon))n'}  \\  +D^{2(H+l+n-2n'(1-\epsilon))}\Big).
\end{multline}

After expanding the product, each exponent is quadratic in $l$ and they have maxima at $0$, $0$ and $n'-H$, respectively. Multiplying the maximum with the number of terms in the sum gives
\begin{multline}
P_{\text{red}}^{\text{good}}\le O(n^6)\Big(D^{\frac{1}{2}-2n(1-2r-2\epsilon+2r\epsilon)}+D^{-n^2(1-4r+3r^2-4\epsilon+8r\epsilon-4r^2\epsilon-\rho^2)}  \\  +D^{-2n(1-2r-2\epsilon+2r\epsilon-\rho)}\Big),
\end{multline}
where we used $H=\rho n$ and $n'=(1-r)n$. When $n$ is large, the last term dominates.
\end{proof}

We are in the position now to state the main theorem. Recall that in coding
theory, the notation $[n,k,d,w]$ refers to a code of size $n$, encoding $k$
logical qudits, with distance $d$, and stabilizer weight $w$.

\begin{thm}
For any large enough $n\in\mathbb{N}$ there exist CSS codes with parameters $[2n^2,2\rho^2 n^2,c n^2,2n]$.
\end{thm}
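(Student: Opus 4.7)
I would construct the code as $\CSS(C_1 \otimes C_2, \partial, P_1 \otimes P_2)$, where $(C_i, \delta_i, P_i)$ are two independent random chain complexes with involution built from eq.\ \eqref{eq:randomconjugate} with $C_{i\pm} = \mathbb{Z}_D^n$, homological dimension $H = \lfloor \rho n \rfloor$, and the block-diagonal involution prescribed in section \ref{sec:double}. The counts then follow at once: $\dim C_+ = n^2 + n^2 = 2n^2$ gives the physical qudit count, and the K\"unneth formula applied separately to the $+$ and $-$ parts of the tensor-product homology gives $\dim H_+(C) = 2H^2 = 2\rho^2 n^2$ logical qudits (up to the floor). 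For the stabilizer weight, since $w(\partial) \le w(\delta_1) + w(P_1) w(\delta_2)$ with $w(P_1) = 1$, and since each $\delta_i$ is block off-diagonal with $n \times n$ blocks so that $w(\delta_i) \le n$, the weight bound $w \le 2n$ is immediate.

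The distance claim is the substance of the theorem. I would fix small positive parameters $\rho, r, \epsilon$ and a small $c$ (with $c < r$), and union-bound over three bad events: (i) either $\delta_i$ fails to be good, which requires $\ker \delta_i$ to contain a nonzero vector supported on the last $n - n' = rn$ coordinates and therefore of weight at most $rn$; (ii) either single-complex code $\CSS(C_i, \delta_i, \pm P_i)$ has distance less than $2rn + 1$; and (iii) $\partial$ or $\partial^T$ admits a nontrivial reduced cycle or cocycle satisfying the uniform low weight condition with parameter $c' = cr^{-1}/(1-r)$. Events (i) and (ii) are controlled by applying lemma \ref{lem:randomcode} to $\delta_i$ and to $\delta_i^T$, each contributing probability $O(D^{(-1/2+\epsilon) n})$; event (iii) is precisely what lemma \ref{lem:Predgood} bounds (together with its symmetric cocycle version), giving probability $O(n^6) \cdot D^{-2(1 - 2r - 2\epsilon + 2r \epsilon - \rho) n}$, which decays exponentially once $\rho, r, \epsilon$ are small enough.

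Conditional on the complements of these three events, any $\psi \in \ker \partial \cap C_+$ representing a nontrivial homology class must have weight at least $cn^2$: if its weight were smaller, then both blocks $\psi_\pm$ would also have weight less than $cn^2$, so by the pigeonhole argument preceding lemma \ref{lem:Ebound} the reduced matrix of $\psi$ would satisfy the uniform low weight condition; by the lemma just before lemma \ref{lem:Ebound} (asserting that a vanishing reduced matrix forces $\psi \in \im\partial$), the reduced matrix cannot vanish, while a nonvanishing reduced matrix is ruled out by event (iii). The same argument applied to cocycles, using the cocycle half of that lemma together with event (iii) for $\partial^T$, gives the $X$-distance bound. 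Combining with the probability estimates, for all large enough $n$ the random construction produces a $[2n^2, 2\rho^2 n^2, cn^2, 2n]$ CSS code with positive probability, proving existence. The main obstacle is lemma \ref{lem:Predgood} itself, whose proof rests on the rank-stratified counts of reduced cycles (lemmas \ref{lem:gammasum}, \ref{lem:Zbound}, \ref{lem:Ebound}) combined with lemma \ref{lem:uniformlowprob} and a careful optimization over the block ranks $R_\pm$ and the auxiliary index $l$; this is where the double-sector structure most visibly complicates the single-sector Bravyi--Hastings analysis, since both blocks must be tracked simultaneously throughout.
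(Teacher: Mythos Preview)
Your overall architecture matches the paper's proof, and your handling of the parameter counts, stabilizer weight, and the conditional distance argument via the pigeonhole step together with the vanishing-reduced-matrix lemma is correct. There is, however, one genuine gap: the union bound over the choice of the $n'$ rows and $n'$ columns in each block. The pigeonhole argument hands you \emph{some} set of $n'$ rows and columns in $\psi_+$ and in $\psi_-$ on which the reduced matrix has uniformly low weight, but the reduced complex, the notion of ``good'', and Lemma~\ref{lem:Predgood} are all stated for the \emph{fixed} first-$n'$ coordinates. Thus your event~(iii) as written (bounded directly by Lemma~\ref{lem:Predgood}) does not exclude a nontrivial low-weight cycle whose pigeonhole coordinates happen to be different from the first $n'$.

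The paper fixes this by union-bounding over all $\binom{n}{n'}^4$ choices (rows and columns in each of the two blocks), using permutation invariance of the random model so that $P_{\text{red}}^{\text{good}}$ is the same for every choice, and then checking that the entropy factor $\binom{n}{n'}^4 = D^{4(-r\log_D r - (1-r)\log_D(1-r))n + o(n)}$ is beaten by the $D^{-2(1-2r-2\epsilon+2r\epsilon-\rho)n}$ decay of Lemma~\ref{lem:Predgood} once $r,\epsilon,\rho$ are small enough. You should insert this factor into your bound for event~(iii) and verify the resulting exponent is still negative; without it the chain from ``low-weight $\psi$'' to ``contradiction with event~(iii)'' is broken.
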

\begin{proof}
Let $\delta_1,\delta_2$ be random boundary operators and $\partial=\delta_1\otimes I_2+P_1\otimes \delta_2$ as before. Then $CSS(C\otimes C,\partial,P_1\otimes P_2)$ encodes $2(\rho n)^2$ qudits and its stabilizer weights are at most $2n$.

The probability that there is a choice of $n'$ rows and columns in both blocks such that there is a nonzero reduced cycle (on these rows and columns) obeying the uniform low weight condition can be bounded from above as
\begin{equation}
P^{\text{bad}}+(1-P^{\text{bad}})\binom{n}{n'}^4P_{\text{red}}^{\text{good}}\le o(1)+O(n^6)\binom{n}{n'}^4D^{-2(1-2r-2\epsilon+2r\epsilon-\rho)n}
\end{equation}
by lemmas \ref{lem:randomcode} and \ref{lem:Predgood}. Using that
\begin{equation}
\binom{n}{n'}=\binom{n}{(1-r)n}=D^{(-r\log_Dr-(1-r)\log_D(1-r))n+o(n)}
\end{equation}
we get that for sufficiently small $r$ and $\epsilon$ this probability goes to $0$. The same bound applies to cocycles since $\partial$ and $\partial^T$ have the same distribution. Therefore when $n$ is large enough, there is at least one $\delta_1,\delta_2$ such that $CSS(C\otimes C,\partial,P_1\otimes P_2)$ has distance $\ge cn^2$.
\end{proof}


\begin{thebibliography}{}
\bibitem{BravyiHastingsACM}
S.~Bravyi and M.~B. Hastings, ``Homological product codes'' in {\em
  Proceedings of the 46th Annual ACM Symposium on Theory of Computing},
  pp.~273--282, ACM, 2014.

\bibitem{BravyiHastings}
S.~Bravyi and M.~B. Hastings, ``Homological product codes'', arXiv preprint, arXiv:1311.0885, 2013

\bibitem{HiltonStammbach}
P.~J. Hilton and U.~Stammbach, {\em A Course in Homological Algebra (Graduate
  Texts in Mathematics, 4)}.
\newblock Berlin-Heidelberg-New York: Springer, 1971.

\bibitem{BravyiTerhal}
S.~Bravyi and B.~Terhal, ``A no-go theorem for a two-dimensional self-correcting quantum memory based on stabilizer codes'', New J. of Phys., 11(4):043029, 2009.

\bibitem{Muralidharan}
S.~Muralidharan, C-L.~Zou, L.~Li, J.~Wen and L.~Jiang, ``Overcoming erasure
errors with multilevel systems'', New. J. of Phys., 19(1):013026, 2017.

\end{thebibliography}
\end{document}